\newcommand{\wh}{\widehat}
\newcommand{\vXvi}[1]{X_{\ms{V}, #1}}
\newcommand{\C}{\mathcal{C}}
\newcommand{\E}{\mathbb{E}} 
\newcommand{\Z}{\mathbb{Z}}
\newcommand{\R}{\mathbb{R}}
\def\eq{\begin{equation}}
\def\en{\end{equation}}
\newtheorem{theorem}{Theorem}[section]
\newtheorem{proposition}[theorem]{Proposition}
    \def\a{\alpha}
    \def\phi{\varphi}
    \def\g{\gamma}
    \def\la{\lambda}
    \def\s{\sigma}
    \def\x{\xi}
    \def\T{\T}
    \def\V|{{\Vert}}
    \def\d{{\rm d}}
    \def\E{\mathbb{E}}
    \def\V{\mathbb{V}}
    \def\mc{\mathcal}
    \def\ms{\mathsf}
    \def\R{\mathbb{R}}
    \def\Z{\mathbb{Z}}
\def\vgv{\g^{\ms{v}}}
\def\vgh{\g^{\ms{h}}}
\def\vI{I}
    \def\vXh{X^{\ms{h}}}
    \def\vXvi{X_i^{\ms{v}}}    
    \def\vXv{X^{\ms{v}}}   
    \def\vXhs{X^{\ms{h,*}}}
    \def\vXvs{X^{\ms{v,*}}}
    \def\vSv{S^{\ms{v}}}
    \def\vSh{S^{\ms{h}}}
        \def\vSvs{S^{\ms{v,*}}}
    \def\vShs{S^{\ms{h,*}}}
    \def\vSp{S^{\bullet}}
   	\def\vSvp{S^{\ms{v},\bullet}}
    \def\vShp{S^{\ms{h},\bullet}}
    \def\vgv{\gamma^{\ms{v}}}
    \def\vgh{\gamma^{\ms{h}}}
        \def\vAs{A^*}
\title{The typical cell in anisotropic tessellations}
\author[1]{A.~Hinsen~\thanks{hinsen@wias-berlin.de}}
\author[2]{C.~Hirsch~\thanks{christian@math.aau.dk}}
\author[1]{B.~Jahnel~\thanks{jahnel@wias-berlin.de}}
\author[3]{E.~Cali~\thanks{elie.cali@orange.com}}
\affil[1]{Weierstrass Institute for Applied Analysis and Stochastics, Mohrenstra{\ss}e 39
10117 Berlin, Germany}
\affil[2]{Department of Mathematical Sciences, Aalborg University, Skjernvej 4, 9220 Aalborg, Denmark}
\affil[3]{Orange Labs Network, 44 Avenue de la R\'epublique, 92320 Ch\^atillon, France}
\begin{document}

\maketitle

\begin{abstract}
	The typical cell is a key concept for stochastic-geometry based modeling in communication networks, as it provides a rigorous framework for describing properties of a serving zone associated with a component selected at random in a large network. We consider a setting where network components are located on a large street network. While earlier investigations were restricted to street systems without preferred directions, in this paper we derive the distribution of the typical cell in Manhattan-type systems characterized by a pattern of horizontal and vertical streets. We explain how the mathematical description can be turned into a simulation algorithm and provide numerical results uncovering novel effects when compared to classical isotropic networks.
\end{abstract}

\begin{IEEEkeywords}
Palm calculus, typical cell, Cox point process, random tessellation, anisotropy.
\end{IEEEkeywords}

\section{Introduction}
%
%
In large cities, major network operators manage several hundreds of base stations together with their associated \emph{serving zones}. The first basic desire is then to understand the average behavior of these serving zones. 
In other words, to give answers to the question:
\begin{center}
\emph{What are the characteristics of a typical cell?}
\end{center}

%
%
As an alternative to a slow and error-prone analysis of each individual cell, stochastic geometry offers a versatile modeling framework, which can serve as a predictive tool or to study use cases~\cite{baccelli2009stochastic1, GlEl17}. In this setting, serving zones are represented by the cells of the Voronoi tessellation associated to a point process of base stations. Leveraging the formalism of \emph{Palm calculus} makes it possible to assign a precise meaning to the idea of selecting a cell at random in a possibly unbounded network. 

%
%
A specific difficulty in the setting of telecommunication networks stems from the practice that cell centers are often located along the streets. In the past, this has been particularly relevant in fixed-access networks, where precise mathematical descriptions together with simulation algorithm have been developed for the typical cell in street systems based on classical tessellation models such as the Poisson line, the Poisson-Voronoi or the Poisson-Delaunay tessellation \cite{line, voronoi, delaunay}.  Nowadays, due to the rise of device-to-device (D2D) networks, for example vehicular networks, and in the context of fifth-generation wireless networks (5G), models with nodes located on street systems become increasingly important, see~\cite{choi1, choi2, quentin, d2dProceeding}. Indeed, vehicles are mostly confined to streets and their communication, within the additional 5G frequencies, should be considered outdoors, since these frequencies will hardly be able to penetrate walls. Besides, also the connectivity characteristics of such ad-hoc network models sensitively depend on the specifics of the spatial patterning of nodes, see~\cite{glauche}.

%
%
In view of the increasingly important emerging markets in the Middle East and African, the classical tessellation models are only of limited use, since they share isotropy as a common feature. In other words, there is no single predominant direction along which the roads would align. This contrasts starkly the prevailing street topography in emerging-market economies, which often exhibits a strikingly rectangular or Manhattan-like architecture.

%
%
At the time of writing, only few anisotropic models have been considered as models for telecommunication networks. Poisson line tessellations exhibit an appealingly tractable Palm distribution \cite{line} and can be made anisotropic by a suitable choice of the directional distribution. However, the distances between streets are necessarily exponentially distributed and thereby often exhibit higher variability than what would be expected in practice. On the other side of the spectrum, also entirely rigid lattice models as considered in \cite{glauche} are not ideal, since in reality distances between streets are not fixed in an entire city.

\begin{figure}[!htpb]
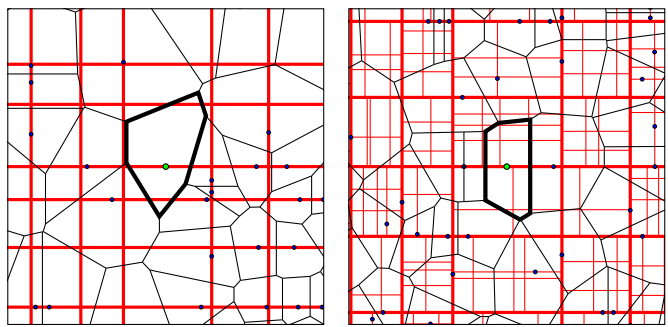

\centering
\begin{subfigure}{.25\textwidth}
  \centering
\input{ManhattanPic_1.tex}
\end{subfigure}%
\begin{subfigure}{.25\textwidth}
  \centering
\input{NestedManhattanPic.tex}
\end{subfigure}
 \caption{The typical cell in a Manhattan grid (left) and in a nested Manhattan grid (right)}
\label{palmFig}
\end{figure}

%
%
In light of this discussion, we extend the method of Palm calculus and typical cells to a more flexible class of anisotropic street system. In particular, based on notions from renewal theory, we introduce the \emph{Manhattan grids}, which represent street systems with two perpendicular directions and a flexible distribution for distances between streets. As an additional step, our model extends to a nested setting, where inside blocks of the Manhattan grid, further systems of side streets are added. Figure~\ref{palmFig} illustrates the system model based on Manhattan and nested Manhattan grids.
The main mathematical contributions of this work, Theorems~\ref{MG_Palm_Rep} and~\ref{NMG_Palm_Rep}, provide a simulation algorithm for the typical cell in Manhattan grids based on a tractable description of the Palm distribution of the street system. Based on this algorithm, we show in a simulation study that anisotropy and regularity of Manhattan grids lead to surprising novel effects that are not present in the classical tessellation models.

%
%
The rest of the paper is organized as follows. In Section \ref{modSec}, we provide the precise mathematical details of the proposed network model. In Section \ref{resSec}, we derive a tractable representation of the Palm distribution and a simulation algorithm for the typical cell in Manhattan and nested Manhattan grids. Then, in Section \ref{simSec}, we illustrate in a simulation study that anisotropy can lead to surprising consequences for the distribution of central network characteristics, such as shortest-path lengths. Finally, Section \ref{concSec} summarizes the findings and points to avenues for further research.


\section{System Model}
\label{modSec}

\subsection{Street-system based network models}
To define the typical cell, we consider a network model with components on a street system as described in \cite{line}. More precisely, we start from a random \emph{street system} $S$ that is invariant with respect to translations in the two space dimensions but not necessarily isotropic. Writing $|B|$ for the total street length in an area $B$, we let $\gamma = \E[|S \cap [0, 1]^2|]$ denote the \emph{street intensity} of $S$, i.e., the expected street length per unit area.
Given a realization of the street system, network components $Y = \{Y_i\}_{i \ge 1}$ are scattered at random along the streets according to a Poisson point process with a linear intensity $\lambda > 0$. Since the street system itself comes from a probabilistic model, the network components $Y$ form a Cox point process with random intensity measure concentrated on $S$.

The network component $Y_i$ is responsible for providing service in an area described by its \emph{cell}
$$\C_i =\{y \in \R^2:\, |y - Y_i| \le \inf_{j \ge 1}|y - Y_j|\}.$$ 
The family of these cells forms the so-called \emph{Voronoi tessellation} of $Y$. 

Then, the \emph{typical cell} $\C^*$ encodes the idea of the cell associated with a component selected at random from $Y$. 
More precisely, relying on the machinery of Palm calculus, $\C^*$ is determined via the distributional identity
\begin{align}
	\label{typCellEq}
	\E[f(\C^*)] = \frac1{\lambda \gamma} \E\Big[\sum_{Y_i \in [0, 1]^2}f(\C_i - Y_i)\Big].
\end{align}
for any non-negative measurable test-function $f$.

%
%
\subsection{Definitions of Manhattan- and nested Manhattan grids}
\label{defSec}
In this paper, the underlying street system $S$ forms either a {Manhattan grid} or a {nested Manhattan grid}. First, we describe the construction of the Manhattan grid. Consider two independent, identically distributed and stationary renewal processes $\vXv$ and $\vXh$ on $\R$ representing the coordinates of vertical and horizontal streets, respectively. That is, the distances between the $i$th and the $(i+1)$th point in such a process are independent over $i\in\mathbb Z$ and distributed according to an inter-arrival distribution. It is further assumed to have the property that any translation of the process does not change its distribution. Then, we define the \emph{Manhattan grid} as
$$
S = \vSh \cup \vSv = (\R \times \vXh) \cup (\vXv \times \R).
$$
In particular, the street intensity decomposes as
$$\g = \vgh + \vgv  = \E\big[|\vSh \cap [0, 1]^2|\big] + \E\big[|\vSv \cap [0, 1]^2|\big].$$
In the special case where the inter-arrival distribution is exponential, both $\vXv$ and $\vXh$ are homogeneous Poisson processes, so that we recover a rectangular Poisson line tessellation.  

%
%
\medskip
The construction of the \emph{nested Manhattan grid} $\bar S$, builds on a Manhattan grid $S$ as the initial tessellation of main streets. Then, a sequence of further independent and identically distributed (iid) Manhattan grids $S_1,S_2,\dots$, models the side streets within every cell or block $(\Xi_i)_{i\ge1}$ in $S$. 
More precisely, $\Xi_i \cap S_i$ describes the inner structure of each $\Xi_i$. The street intensity $\bar \g$ of $\bar S$ takes into account contributions both from the main streets as well as the side streets. Therefore, it equals $\bar\g = \g + \g_1$ where 
$$\g_1 =  \E\big[|S_1 \cap [0,1]^2|\big].$$
denotes the intensity of side streets.

\medskip
%
%
\subsection{Construction of the underlying renewal processes}
As explained above, in order to construct and implement a Manhattan grid with given inter-arrival distribution, we first need to be able to simulate the corresponding stationary renewal process. In queuing theory, a \emph{one-sided renewal process} $A = \{A_i\}_{i\ge1}$ with inter-arrival distribution $\mc L(I)$ is a point process on the positive half-line $[0,\infty)$ where the inter-arrival times $\vI_{i+1} = A_{i+1} - A_i$ are iid with distribution $\mc L(I)$, see \cite[Section V.1]{As03}.

In order to transform the one-sided process into a stationary renewal process, we first add another independent renewal process on the negative half-line $(-\infty,0]$ with the same inter-arrival distribution. Although this defines a two-sided process, it is not yet stationary as the initial segment containing the origin requires special attention.  More precisely, due to the waiting-time paradox, when picking a random point in time, the probability for this point to be in a larger segment is increased \cite[Section V.3]{As03}. Therefore, the length of the segment containing the origin follows a \emph{length-biased distribution} $\mc L(\vI^*)$ characterized by
$$\E[f(\vI^*)] = \frac1{\E[\vI]}\E[\vI f(\vI)]$$
for any non-negative measurable test-function $f$. If the distribution $\mc L(\vI)$ has a density $g(x)$ w.r.t.~the Lebesgue measure, then $\mc L(\vI^*)$ has the density $xg(x)/\a$, where $\a$ is the normalization constant.

%
%
\medskip
As illustrated in Figure~\ref{statFig}, we construct the stationary renewal process by sampling the length of the initial segment containing the origin according to the length-biased distribution with the origin chosen uniformly at random on that edge. The remaining segments are sampled independently according to the given inter-arrival distribution $\mc L(\vI)$.  Then, \cite[Theorem 8.4.1]{Th00} describes the distribution of the resulting stationary process, which we present in Proposition~\ref{StatRen}.
\begin{figure}[!htpb]
    \centering
    \begin{tikzpicture}[scale=1.5]
\draw[thick] (-3, -.2)--(3,-.2);

\fill[black] (1.3,-.2) circle (2pt);
\fill[black] (2.3,-.2) circle (2pt);
\fill[black] (-1.6,-.2) circle (2pt);
\fill[black] (-2.5,-.2) circle (2pt);

    \draw (0,-.15)--(0,-0.25);

    \draw [decorate,decoration={brace}] (-2.5,-.1)--(-1.6,-.1);
    \draw [decorate,decoration={brace}] (-1.6,-.1)--(0,-.1);
    \draw [decorate,decoration={brace}] (0,-.1)--(1.3,-.1);
    \draw [decorate,decoration={brace}] (1.3,-.1)--(2.3,-.1);

    \coordinate[label=below:{$o$}] (a) at (0,-.2);
    \coordinate[label=below:{$A_1$}] (a) at (1.3,-.2);
    \coordinate[label=below:{$A_2$}] (a) at (2.3,-.2);
    \coordinate[label=below:{$A_{-1}$}] (a) at (-1.5,-.2);
    \coordinate[label=below:{$A_{-2}$}] (a) at (-2.4,-.2);

    \coordinate[label=above:{$UI^*_{}$}] (a) at (.8,-.1);
    \coordinate[label=above:{$(1-U)I^*$}] (a) at (-.8,-.1);
    \coordinate[label=above:{$I_{1}$}] (a) at (1.8,-.1);
    \coordinate[label=above:{$I_{-1}$}] (a) at (-2,-.1);
\end{tikzpicture}
    \caption{Construction of the stationary renewal process}
\label{statFig}
\end{figure}
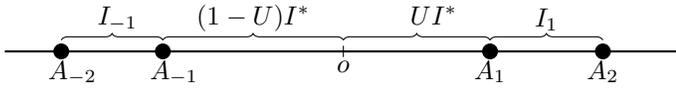

\begin{proposition}\label{StatRen}
	Let $\{\vI_i\}_{i \in \Z \setminus \{0\}}$ be a doubly-infinite sequence of iid inter-arrival times distributed according to $\mc L(I)$. Next, let $\vI^*$ be an independent random variable distributed according to the length-biased distribution and let $U \sim {\bf Unif}([0,1])$ be independent and uniformly distributed on $[0,1]$. Then, the point process of arrival times
    $$A_i=U \vI^*+\sum_{j \le i-1} \vI_j\quad\text{ and }\quad A_{-i}=(U-1)\vI^*-\sum_{j \le i-1}\vI_{-j}$$
    is stationary. It defines the \emph{stationary renewal process with inter-arrival distribution $\mc L(\vI)$}.
\end{proposition}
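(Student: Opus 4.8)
The plan is to show directly that the law of the point process $A=\{A_i\}_{i\in\Z}$ is invariant under every spatial shift $x\mapsto x-t$, $t\in\R$, which is the definition of stationarity. Since $A$ is a simple, locally finite point process on $\R$ (local finiteness holds almost surely because $\E[\vI]=\a<\infty$ and $\vI>0$), its distribution is determined by the void probabilities $\Prob(A\cap B=\es)$ over compact $B$, equivalently by its Laplace functional or its finite-dimensional distributions. It therefore suffices to prove $\Prob((A-t)\cap B=\es)=\Prob(A\cap B=\es)$ for all $t$ and all $B$. I would phrase this as comparing the configuration ``seen from the origin'' with the configuration ``seen from the point $t$'', and since the shifts form a group it is enough to establish the comparison for all $t$ at once.

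First I would reduce the problem to the joint law of the two \emph{recurrence times}. Writing $\a=\E[\vI]$ for the mean inter-arrival time and $g$ for the inter-arrival density, I decompose $A$, viewed from a marked point (the origin, or the point $t$), into: (i) the backward and forward recurrence times $B$ and $F$, the distances from the marked point to its nearest neighbours on the left and right; and (ii) the remaining inter-arrival times on either side of the covering interval. By construction, conditionally on the covering interval the latter are iid with law $\mc L(\vI)$ on each side, independently, and this conditional renewal skeleton is plainly unaffected by which point we mark. Hence stationarity reduces to the single claim that the joint law of $(B,F)$ is the same whether the marked point is $0$ or $t$. Seen from the origin, $B=(1-U)\vI^*$ and $F=U\vI^*$ with $\vI^*$ size-biased (density $xg(x)/\a$) and $U\sim{\bf Unif}([0,1])$ independent, so that, using that $(B,F)$ is uniform on $\{b+f=\vI^*\}$ given $\vI^*$, a one-line change of variables yields the joint density
\[
  f_{(B,F)}(b,f)=\frac{g(b+f)}{\a},\qquad b,f>0.
\]

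The heart of the argument, and the main obstacle, is to show that marking the point $t$ instead of $0$ reproduces the same hook density. This is exactly the self-consistency encoded in the renewal--reward (equivalently Palm-inversion) identity: spreading the marked point uniformly over a size-biased covering interval, with an ordinary renewal process filling in the rest, is precisely the inversion of the renewal Palm measure, and such inversions are automatically shift-invariant. Concretely, for a bounded test functional $h$ one represents
\[
  \E[h(A-t)]=\frac1\a\,\E\Big[\int_0^{\vI_1}h\big(A^{\circ}-s-t\big)\,\d s\Big],
\]
where $A^{\circ}$ is the renewal process pinned to have a point at the origin and $\vI_1$ its first forward gap; one then checks that shifting the bi-infinite iid gap sequence absorbs $t$ into a relabelling of the gaps, so the right-hand side does not depend on $t$. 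This final cancellation is delicate precisely because the origin sits in a \emph{size-biased} gap (the waiting-time paradox), so one cannot translate within a fixed gap and must verify that an arbitrary shift again lands, in distribution, in a size-biased gap with a uniformly placed base point. Everything else—the iid structure of the non-central gaps and the elementary change of variables for $(B,F)$—is routine; it is the interplay between size-biasing and uniform spreading that carries stationarity, and checking it uniformly in $t$ is exactly the statement of \cite[Theorem 8.4.1]{Th00}, which I would invoke to conclude.
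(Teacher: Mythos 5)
Your proposal is correct and, despite the extra scaffolding (void probabilities, the recurrence-time density $g(b+f)/\alpha$, the Palm-inversion formula), it delegates the decisive stationarity step to \cite[Theorem 8.4.1]{Th00} --- which is exactly what the paper does, since the paper gives no self-contained proof of Proposition~\ref{StatRen} and presents it purely as a summary of that theorem. So you have taken essentially the same route as the paper, only with more preliminary reduction before invoking the same citation.
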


%
%
When building the typical cell, the network is seen from a network component selected at random on the street system. Since in the Manhattan-grid case, this network is constructed from a stationary renewal process, we need to describe the stationary renewal process seen from a randomly selected arrival time. To make this mathematically precise, we resort to the framework of Palm calculus. More precisely, the \emph{Palm version} $A^*$ of the stationary renewal process $A$ is determined by the distributional identity
$$\E[f(A^*)] = \frac1{\E[\#\{A\, \cap [0,1]\}]}\E\Big[\sum_{A_i \in A\, \cap [0, 1]} f(A - A_i)\Big],$$
for any non-negative measurable test-function $f$. 
In the setting of renewal processes, passing from $A$ to $A^*$ is dual to the construction of $A$ as made precise in \cite[Theorem 8.4.1]{Th00}, which we summarize in Proposition~\ref{PalmRen}.
\begin{proposition}\label{PalmRen}
    Let $A$ be a stationary renewal process with inter-arrival distribution $\mc L(\vI)$. Then, the Palm version $\vAs= \{\vAs_i\}_{i \in \Z}$ is given by $\vAs_0=o$ and for $i \ge 1$,
	\begin{align}
		\label{palmRenEq}
		\vAs_i=\sum_{j \le i}\vI_j\quad \text{and}\quad\vAs_{-i} = -\sum_{j \le i}\vI_{-j}.
	\end{align}
\end{proposition}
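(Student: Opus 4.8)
The plan is to verify the defining Palm identity for $\vAs$ directly, using the explicit stationary construction from Proposition~\ref{StatRen}. First I would record the intensity of the stationary renewal process: by the construction (equivalently, by elementary renewal theory) the expected number of arrivals per unit length is $\E[\#\{A\cap[0,1]\}]=1/\E[\vI]$, so the normalizing constant is $\E[\vI]$. It then remains to establish, for every non-negative measurable $f$,
$$\E[f(\vAs)] = \frac{1}{\E[\#\{A \cap [0,1]\}]}\,\E\Big[\sum_{A_i \in A \cap [0, 1]} f(A - A_i)\Big],$$
where $A$ is built as in Proposition~\ref{StatRen} and $\vAs$ is the two-sided process with a point at the origin whose inter-arrival times are all iid according to $\mc L(\vI)$.

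Second, I would evaluate the right-hand side by conditioning on the central length-biased edge $\vI^*$ and the uniform split $U$, and then invoke the renewal property. Only the epochs $A_1,A_2,\dots$ to the right of the origin can lie in $[0,1]$, and for the shifted configuration $A-A_i$ the renewal property makes the gaps to the right, namely $\vI_i,\vI_{i+1},\dots$, iid $\sim\mc L(\vI)$, while to the left one encounters $i-1$ regular gaps, then the \emph{anomalous} central edge $\vI^*$, then further regular gaps. Thus $A-A_i$ is never literally a copy of $\vAs$: it always carries the length-biased edge $\vI^*$ somewhere to the left, at a depth that grows with $i$. The whole content of the claim is therefore that, after the epoch-average and renormalization, this central edge contributes nothing and all inter-arrival times surviving in $\vAs$ — including the two flanking the origin — are again iid $\sim\mc L(\vI)$.

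Third, I would make this cancellation explicit as the inspection/waiting-time paradox run in reverse: the length-biased law $\mc L(\vI^*)$ describes the gap containing a \emph{fixed} point of the line, whereas the gaps adjoining a \emph{typical arrival epoch} are un-biased and distributed as $\mc L(\vI)$, and the Palm average selects a typical epoch rather than a typical time. The cleanest way to see this is through the ergodic characterization of the Palm expectation,
$$\E[f(\vAs)] = \lim_{n\to\infty}\frac{1}{\#\{A\cap[0,n]\}}\sum_{A_i \in A\cap[0,n]} f(A - A_i),$$
valid because the stationary renewal process is ergodic. For a test function $f$ depending only on a bounded window around the origin, the anomalous edge $\vI^*$ sits at distance tending to infinity relative to almost all selected epochs $A_i$ as $|i|\to\infty$, so $A-A_i$ agrees in law on that window with $\vAs$; the vanishing fraction of epochs for which $\vI^*$ is still local contributes nothing in the normalized limit. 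A monotone-class argument then extends the identity from local to general non-negative measurable $f$. This mutual-inverse relationship between the stationary operation of Proposition~\ref{StatRen} and the Palm operation computed here is exactly \cite[Theorem 8.4.1]{Th00}, and it yields the asserted form of $\vAs$.

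I expect the main obstacle to be precisely this cancellation, i.e.\ the rigorous argument that the length-biased central edge $\vI^*$ leaves no trace in the Palm version and that both origin-adjacent inter-arrival times of $\vAs$ revert to iid $\sim\mc L(\vI)$. Everything else — the intensity computation and the identification of the regular gaps away from the central edge via the renewal property — is routine bookkeeping once this key point is secured.
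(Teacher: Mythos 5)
Your argument is sound in outline, but it takes a genuinely different route from the paper: the paper offers no proof of Proposition~\ref{PalmRen} at all---like Proposition~\ref{StatRen}, it is stated as a summary of the stationarity/Palm duality in \cite[Theorem 8.4.1]{Th00}---whereas you reconstruct the result from the defining Palm identity. Your central observation is exactly the right one: no shifted configuration $A - A_i$ is ever a literal copy of $\vAs$, since the length-biased edge $\vI^*$ survives at depth $i-1$ to the left of the new origin, so the whole content of the proposition is that this bias is washed out by the epoch average. The citation buys the paper brevity and the safety of an established theorem; your route buys a self-contained explanation of why the inspection-time bias leaves no trace, which is the conceptual heart of the duality.

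Two steps in your sketch would need firming up to give a complete proof. First, ergodicity of the stationary renewal process under the shift flow---which your cross-ergodic characterization of the Palm expectation requires in order for the almost-sure limit to be deterministic---is itself a standard but nontrivial fact that you assert without justification. Second, knowing that $A - A_i$ ``agrees in law'' with the all-iid-gaps process on a bounded window for all but finitely many $i$ does not by itself identify the almost-sure limit of the Cesàro averages; you need a law-of-large-numbers step. The clean fix: on the event $\{\vI_1 + \dots + \vI_{i-1} > K\}$ the window configuration of $A - A_i$ coincides pathwise (not merely in law) with the corresponding window of the renewal process generated by the iid gaps $(\vI_j)_{j \ge 1}$ alone, so Birkhoff's theorem applied to this factor of an iid sequence identifies the limit as the expectation of $f$ under the law \eqref{palmRenEq}; alternatively, take expectations and combine bounded convergence with Cesàro convergence of $\E[f(A - A_i)]$. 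Note also that if you apply the defining identity over the growing window $[0,n]$ (valid for each $n$ by stationarity) instead of invoking the cross-ergodic theorem, the flow-ergodicity input can be avoided entirely: the strong law $\#\{A \cap [0,n]\}/n \to 1/\E[\vI]$ together with Birkhoff for the iid gaps suffices.
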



\section{Results}
\label{resSec}

In this section, we derive a representation for the typical cell $\C^*$, which is both easy to understand and amenable to simulations. It is based on a tractable description of the Palm versions of the underlying street systems.

\subsection{Distribution of the typical cell}
We rely on \cite[Lemma 3.3]{line}, which expresses the distribution of $\C^*$ as defined in \eqref{typCellEq} with respect to the distribution of the Palm version $S^*$ of the underlying street system. The latter describes the street system seen from a point selected at random on the streets and is formally given as 
\begin{align}
	\label{palmStreetEq}
	\E[f(S^*)] = \frac1\g \E\Big[\int_{S\cap [0, 1]^2}f(S - y) \d y\Big],
\end{align}
for any non-negative measurable test-function $f$. We summarize the result of \cite[Lemma 3.3]{line} in the following simulation algorithm for the typical cell $\C^*$. We first realize a street system $S^*$ distributed according to the Palm version of $S$ and then place the network components according to a Poisson point process $Y$ on this network. The typical cell corresponds to the cell when adding an additional point at the origin to this network.
\begin{algorithm}
  \caption{Typical cell $\C^*$
    \label{cellSim}}
  \begin{algorithmic}
          \State          $S^* \sim$ Palm version of $S$
	  \State          Given $S^*$: $Y\sim$ Poisson point process with intensity $\la |S^* \cap \cdot|$
	  \State \Return Voronoi cell $\C^*$ at $o$ with respect to $\{o\} \cup Y$
	   \end{algorithmic}
\end{algorithm}

Hence, the main mathematical contributions of the present paper are rigorous derivations of tractable representations for the Palm distribution of the Manhattan grid and the nested Manhattan grid.

\subsection{Palm distribution of Manhattan grids}

First, we provide an algorithmic description of a street system $\vSp$ that will be shown to have the same distribution as $S^*$. In words, in the algorithm, the origin is located on a horizontal street with probability $\vgh/\g$. In this case, the coordinates of vertical streets form a stationary renewal process $\vXv$, whereas for the horizontal streets we need the Palm version $\vXhs$ as defined in \eqref{palmRenEq}. If the origin is located on a vertical street, the roles are reversed.

%
%
\begin{algorithm}
  \caption{Palm version of a Manhattan grid
    \label{mgSim}}
  \begin{algorithmic}
          \State          $U \sim$ {\bf Unif}$([0,1])$
        \If{$U \le \vgh/\g$}
         \State $\vSvp\gets\vXv\times\R$ and $\vShp\gets\R\times \vXhs$
          \Else 
         \State $\vSvp\gets\vXvs\times\R$ and $\vShp\gets\R\times \vXh$
        \EndIf
          \State \Return{$\vSvp \cup \vShp$}
  \end{algorithmic}
\end{algorithm}

%
%
\begin{theorem}[Palm version of a Manhattan grid]
	\label{MG_Palm_Rep}
	The Palm distribution of the Manhattan grid can be represented via Algorithm \ref{mgSim}.
\end{theorem}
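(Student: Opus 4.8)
The plan is to verify the defining identity \eqref{palmStreetEq} for the Palm version $S^*$ against the law of the output $\vSp$ of Algorithm \ref{mgSim}, i.e.\ to show that both assign the same value to $\E[f(\cdot)]$ for every non-negative measurable test-function $f$. Since the crossings $\vSh \cap \vSv$ form a locally finite point set and are therefore Lebesgue-null along the streets, the integral over $S \cap [0,1]^2$ in \eqref{palmStreetEq} splits into a horizontal and a vertical contribution,
\begin{align*}
\E[f(S^*)] &= \frac1\g \E\Big[\int_{\vSh \cap [0,1]^2} f(S-y)\,\d y\Big]\\
&\quad{}+ \frac1\g \E\Big[\int_{\vSv \cap [0,1]^2} f(S-y)\,\d y\Big].
\end{align*}
By the symmetry between the two coordinate directions it suffices to analyze the horizontal term and to obtain the vertical one by exchanging the roles of $\vXh$ and $\vXv$.

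For the horizontal term I would parametrize $\vSh = \R \times \vXh$, so that $\vSh \cap [0,1]^2$ is the disjoint union over $h \in \vXh \cap [0,1]$ of the unit segments $[0,1]\times\{h\}$. The integral then becomes a sum over these grid points of $\int_0^1 f\big((\R \times (\vXh - h)) \cup ((\vXv - t)\times \R)\big)\,\d t$. The decisive observation is that translating $S$ by $(t,h)$ acts separately on the two families of streets: it shifts the horizontal coordinates by the grid point $h$ and the vertical coordinates by the continuous amount $t$. Because $\vXv$ is stationary, $\vXv - t \stackrel{\mathrm{d}}{=} \vXv$ for each fixed $t$; hence, after taking expectations, the integrand no longer depends on $t$ and the integral $\int_0^1\d t$ contributes a factor $1$, leaving $\E\big[\sum_{h \in \vXh \cap [0,1]} f\big((\R \times (\vXh - h)) \cup (\vXv \times \R)\big)\big]$ with an independent stationary copy of $\vXv$.

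The final step is to recognize this sum over shifted grid points as a Palm average of the renewal process $\vXh$. Conditioning on the independent realization of $\vXv$ and applying the definition of the Palm version of a renewal process (Proposition \ref{PalmRen}) to the functional $A \mapsto f((\R \times A)\cup(\vXv \times \R))$, the sum equals $\E[\#(\vXh \cap [0,1])]\,\E[f((\R \times \vXhs)\cup(\vXv \times \R))]$. Since the intensity of $\vXh$ is precisely $\vgh = \E[|\vSh \cap [0,1]^2|]$ and $\vXhs$ is independent of $\vXv$, dividing by $\g$ turns the horizontal term into $(\vgh/\g)\,\E[f((\R \times \vXhs)\cup(\vXv \times \R))]$, which is exactly the branch $\{U \le \vgh/\g\}$ of Algorithm \ref{mgSim}. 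The symmetric computation yields $(\vgv/\g)\,\E[f((\R \times \vXh)\cup(\vXvs \times \R))]$, the complementary branch. As $\vgh/\g + \vgv/\g = 1$, the two terms combine to the mixture law produced by Algorithm \ref{mgSim}, establishing $S^* \stackrel{\mathrm{d}}{=} \vSp$.

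The main obstacle is the careful bookkeeping in the separation of the two translations: one must keep track of the fact that it is the stationarity of the transverse process $\vXv$ that eliminates the continuous shift $t$, whereas it is the point-shift structure of the longitudinal process $\vXh$ that generates the length-biased initial segment implicitly encoded in $\vXhs$. Independence of $\vXh$ and $\vXv$ is essential for this to go through, as it permits the Palm operation to act on one coordinate while the other is merely re-drawn from its stationary law; the conditioning argument in the last step would not factor without it.
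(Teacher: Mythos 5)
Your proof is correct and takes essentially the same route as the paper's: split the Palm integral \eqref{palmStreetEq} into horizontal and vertical contributions, decouple the translation into a point-shift of the longitudinal renewal process and a continuous shift of the transverse one (eliminated by stationarity and independence), and then identify the remaining sum via the renewal Palm identity, producing the mixture with weights $\vgh/\g$ and $\vgv/\g$ of Algorithm~\ref{mgSim}. The only differences are cosmetic: you treat the horizontal term first and make explicit the Fubini/conditioning and null-crossing details that the paper leaves implicit.
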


\begin{proof}
	Starting from definition \eqref{palmStreetEq}, we compute
\begin{align*}
        &\E\big[f(\vSvs,\vShs)\big]\\
	& = \frac 1\g \E\Big[\int_{S\cap[0, 1]^2}f(\vSv - x,\vSh - x)\d x\Big]\\
&= \frac 1\g \E\Big[\int_{\vSv\cap[0, 1]^2}f(\vSv - x,\vSh - x)\d x\Big]\\
        &\phantom{=} + \frac 1\g \E\Big[\int_{\vSh\cap[0, 1]^2}f(\vSv - x,\vSh - x)\d x\Big].
\end{align*}
Then, for the first summand, we further write
\begin{align*}
	&\E\Big[\int_{\vSv\cap[0,1]^2}f(\vSv - x,\vSh - x)\d x\Big]\\
        &=\E\Big[\sum_{\vXvi \in [0,1]} \int_0^1f(\vSv - (\vXvi,u),\vSh - (\vXvi,u))\d u\Big]\\
&= \E\Big[\sum_{\vXvi \in [0,1]} \int_0^1f(\vSv  -  (\vXvi,0),\vSh  -  (0,u))\d u\Big]\\
&= \E\Big[\sum_{\vXvi \in [0,1]} f(\vSv - (\vXvi,0),\vSh)\Big]
\end{align*}
	where we used stationarity in the last step. Of course the computation can be repeated with reversed roles of horizontal and vertical streets.	Hence, by definition \eqref{palmRenEq},
\begin{align*}
        \E\big[f(\vSvs,\vShs)]&=\frac{\vgv}\g \E[f(\vXvs\times\R,\vSh)\big] \\
	& +\frac{\vgh}\g \E[f(\vSv,\R\times\vXhs)]=\E[f(\vSvp,\vShp)]
\end{align*}
which is the desired representation.
\end{proof}

\subsection{Palm distribution of nested Manhattan grids}
Next, we give an algorithm for nested Manhattan grids. Here, we assume that the Palm version of the non-nested Manhattan grid is available via Algorithm \ref{mgSim}.

%
%
\begin{algorithm}
  \caption{Palm version of a nested Manhattan grid
    \label{nmgSim}}
  \begin{algorithmic}
          \State $U \sim$ {\bf Unif}$([0,1])$
        \If{$U \le \g/\bar\g$}
         \State $\vSp \gets S^*$ and  $\vSp_1\gets S_1$
          \Else
         \State $\vSp \gets S$ and $\vSp_1\gets S^*_1$
        \EndIf
          \State \Return{$\vSp \cup \vSp_1$}
  \end{algorithmic}
\end{algorithm}

%
%
\begin{theorem}[Palm version of a nested Manhattan grid]
	\label{NMG_Palm_Rep}
	The Palm distribution of the nested Manhattan grid can be represented via Algorithm \ref{nmgSim}.
\end{theorem}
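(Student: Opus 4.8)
The plan is to follow the template of the proof of Theorem~\ref{MG_Palm_Rep}: start from the defining identity \eqref{palmStreetEq} applied to the nested grid $\bar S$, and split the Palm integral according to whether the sampled point lies on a main street or on a side street. Writing $\Sigma = \bigcup_{i\ge1}(\Xi_i \cap S_i)$ for the system of side streets, so that $\bar S = S \cup \Sigma$ with $S$ and $\Sigma$ meeting only in a length-null set, the additivity of the length measures of $S$ and $\Sigma$ yields
\begin{align*}
\E[f(\bar S^*)]
&= \frac1{\bar\g}\,\E\Big[\int_{S\cap[0,1]^2} f(\bar S - y)\,\d y\Big]\\
&\phantom{=}+ \frac1{\bar\g}\,\E\Big[\int_{\Sigma\cap[0,1]^2} f(\bar S - y)\,\d y\Big].
\end{align*}
Since $\bar\g = \g + \g_1$ and $\Sigma$ is stationary with intensity $\g_1$, the two prefactors are precisely the branch probabilities $\g/\bar\g$ and $\g_1/\bar\g$ of Algorithm~\ref{nmgSim}; it then remains to match each summand with the corresponding branch.

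For the first summand I would condition on the main grid $S$ and use that the side-street templates $(S_i)_{i\ge1}$ are independent of $S$ and inserted into the blocks $\Xi_i(S)$ in a shift-equivariant fashion. Hence, for a point $y$ on a main street, the shifted side streets $\Sigma - y$ are again a fresh family of iid Manhattan grids placed into the (shifted) blocks of $S-y$, with a law that does not depend on the choice of $y$. Integrating out the side streets reduces this summand to the integral $\tfrac1\g\,\E[\int_{S\cap[0,1]^2} g(S-y)\,\d y] = \E[g(S^*)]$ already evaluated in the proof of Theorem~\ref{MG_Palm_Rep}, with $g$ carrying an independent fresh side-street construction. This is exactly the branch $\vSp \gets S^*$, $\vSp_1 \gets S_1$.

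The second summand is the crux. Here I would condition on $S$, decompose $\Sigma \cap [0,1]^2 = \bigsqcup_i (\Xi_i \cap S_i \cap [0,1]^2)$ over the disjoint blocks, and apply the Campbell--Mecke formula to the single template $S_i$, which, given $S$, is still a stationary Manhattan grid of intensity $\g_1$ that is \emph{independent} of the block $\Xi_i$ and of the other templates. This replaces the origin-block template by its Palm version $S_1^*$ while leaving the remaining blocks' templates as fresh iid grids. Two points must be checked: first, that intersecting with the independent block $\Xi_i$ does not distort the Palm distribution of $S_i$, so that the origin's block genuinely acquires the law of $S_1^*$ and, crucially, that this $S_1^*$ comes out independent of the block shape; second, that because each $S_i$ has spatially homogeneous intensity $\g_1$ independent of $S$, the conditional mean side-street measure equals $\g_1$ times Lebesgue measure, so that the $y$-weighting over each block is uniform. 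The latter means that, as far as the main grid (equivalently the block partition) is concerned, sampling the side-street point amounts to sampling an area-typical point, whence the main grid is returned in its plain stationary law $S$ rather than a Palm-shifted version. Assembling these observations identifies the second summand with $\tfrac{\g_1}{\bar\g}\,\E[f(S \cup \Sigma^*)]$, where $\Sigma^*$ is the side-street system whose origin block is filled by $S_1^*$, i.e.\ the branch $\vSp \gets S$, $\vSp_1 \gets S_1^*$.

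The main obstacle is the bookkeeping in this side-street branch: one must disentangle the shift acting simultaneously on the main grid, which fixes the block partition, and on the template in the origin's block, and then confirm that the restriction to blocks is harmless precisely because each block is a measurable function of $S$ and therefore independent of the template $S_i$. The additive Palm splitting and the main-street branch are routine adaptations of Theorem~\ref{MG_Palm_Rep}; it is the independence between $S$ and $(S_i)_{i\ge1}$, together with the uniformization provided by the stationarity of the templates, that makes the side-street computation close cleanly and decouples the stationary main grid from the Palm-shifted origin-block template.
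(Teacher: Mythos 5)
Your proposal is correct, but it takes a genuinely different route from the paper. The paper's proof is essentially a one-liner: it writes $\bar S = h(S;\{S_i\}_{i\ge1})$ for the translation-covariant construction rule and then invokes the general Palm formula for nested tessellations, \cite[Theorem 3.1]{nested}, which delivers the mixture identity
\begin{align*}
	\E[f(\bar S^*)] &= \frac{\g}{\bar\g}\,\E\big[f(h(S^*;\{S_i\}_{i\ge1}))\big] + \frac{\g_1}{\bar\g}\,\E\big[f(h(S;S_1^*,\{S_i\}_{i\ge2}))\big]
\end{align*}
in one step. You instead derive this identity from first principles, mirroring the proof of Theorem~\ref{MG_Palm_Rep}: you split the defining Palm integral \eqref{palmStreetEq} for $\bar S$ by additivity of the length measure into main-street and side-street contributions, reduce the first contribution to the already-established Palm computation for $S$ after integrating out the independent side streets, and handle the second by conditioning on $S$ and applying the refined Campbell--Mecke formula to the template $S_i$ of the block containing the integration point. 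Your two flagged checks are exactly the right ones, and both go through: restricting $S_i$ to the block $\Xi_i$ is harmless because $\Xi_i$ is a measurable function of $S$ and hence independent of $S_i$, and the constant conditional intensity $\g_1$ turns the side-street weighting into a uniform Lebesgue weighting over $[0,1]^2$, which, combined with stationarity of $S$ and translation covariance of $h$, is precisely why the main grid is returned in its stationary rather than Palm-biased law in that branch. In effect you have reproved the special case of \cite[Theorem 3.1]{nested} that the paper uses as a black box: your route buys self-containedness and an explicit explanation of why only one layer is Palm-shifted in each branch, while the paper's route buys brevity and applies verbatim to more general nesting constructions.
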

\begin{proof}
Proceeding along the lines of the proof of Theorem \ref{MG_Palm_Rep}, the Palm version $\bar S^*$ can be written as a
	mixture of two Palm versions. Formally, $\bar S = h(S; \{S_i\}_{i \ge 1})$, where $h$ encodes the translation-covariant construction rule described in Section \ref{defSec}. Then, by~\cite[Theorem 3.1]{nested}, the identity
	\begin{align*}
		\E[f( S^*)] &= \frac{\g}{\bar\g}\E[f(h(S^*;\{S_i\}_{i\ge1})] \\
		&\phantom{=} + \frac{\g_1}{\bar\g}\E[f(h(S;S^*_1,\{S_i\}_{i \ge 2})]
	\end{align*}
holds for all non-negative measurable test-functions $f$, where $S_1$ denotes the side-street model within the block of the main street model containing the origin.
But this is the desired representation.
\end{proof}

%
%
%
%


\section{Simulation study}
\label{simSec}
Algorithms~\ref{cellSim} and~\ref{mgSim} open the door to investigating how the anisotropy of the street system affects pivotal network characteristics. As a prototypical example, we present here \emph{typical shortest-path lengths}. Loosely speaking, this network characteristic describes the length of the shortest path on the street network from a randomly selected point on the street to the network component in whose serving zone it is located. 

More rigorously, we place network components $Z$ according to a Cox process on the Palm distribution $S^*$ of the street system, and then measure the shortest connection length $\ell\big(Z_i\big)$ on $S^*$ from the origin to the network component $Z_i$ in which it is contained. 
The simulation study presented below reveals that the probability density of shortest-path lengths in Manhattan grids exhibits properties that can not be observed in isotropic networks \cite{spl}.

In \cite{spl}, the typical shortest-path length was considered from the perspective of fixed-access networks, where it provides a cost indication for upgrading copper wires to fibers. In the setting of wireless networks, it reappears as a central performance indicator for D2D networks. Indeed, as users move predominantly along streets, relaying messages from a given user to its associated infrastructure component via other users is constrained by the topology and geometry of the street system. Hence, knowing the typical shortest-path length provides a strong indication on the number of required relaying hops, which can then be used to decide whether D2D is a viable option for the service under consideration.

To describe the probability density $f$ of the typical shortest-path length, we rely on the estimator $\wh f(r)$ from \cite[Theorem 2]{spl}. For $n \ge 1$ iid realizations $\C_1^*, \dots, \C_n^*$ of the typical cell on the typical street systems $S_1^*, \dots, S_n^*$, it is given as 
\begin{align}
	\label{denseEq}
	\wh f(r) = \frac\la n \sum_{i \le n} N_{i, r}
\end{align}
where 
$$N_{i, r} = \#\{y \in \C_i^* \cap S_i^*:\, \ell(y) = r\}$$ 
denotes the number of points on $\C_i^* \cap S_i^*$, whose shortest-path length to $o$ equals $r$. 

In the simulation study, we work with an inter-arrival distribution $\mc L(I)$ that is sufficiently flexible to capture both the mean as well as the fluctuations of distances. In order to achieve this, we choose $\mc L(I)$ as a truncated Gaussian. That is, $I \sim \mc N_+(\mu, \sigma^2)$ is distributed according to a normal random variable $\mc N(\mu, \sigma^2)$ with mean $\mu$ and variance $\sigma^2$ conditioned on being nonnegative. Unless stated otherwise we assume the same inter-arrival distribution for horizontal and vertical streets.

Based on the estimator \eqref{denseEq}, Figure~\ref{convFig} illustrates the density of the typical shortest-path length in a symmetric Manhattan grid.

%
%
\begin{figure}[!htpb]
\centering
	\scalebox{.5}{\input{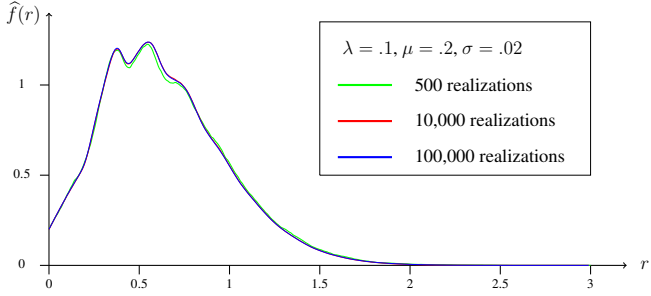}}
	\caption{Typical shortest-path length}
	\label{convFig}
\end{figure}

The plot reveals two striking features of typical shortest-path lengths based on Manhattan grids that do not occur in the isotropic models studied in~\cite[Section 5]{spl}. First, the density increases linearly for small distances. More precisely, we detect two phases of linear increase, where the slope in the second phase equals roughly three times the slope in the first phase. Moreover, for longer distances the density is bumpy exhibiting local minima and maxima. 

%
\subsection{Results for lattice-like system}
To understand better the origins of these peculiarities, we present in Figure~\ref{sigFig} the effects of changing the magnitude of variability in the street distances caused by different choices of $\sigma$. To fix ideas, we always think of the initial segment as horizontal. In the limiting regime $\sigma \to 0$ the streets arrange in a planar lattice with fixed mesh size $L$ corresponding to $\E I=2/\g\approx \mu$. Now, it becomes more clearly visible that the two initial phases mentioned above both occur over a segment of length $L$. Moreover, at the end of the second segment, the density is discontinuous and drops roughly by twice $\wh f(0)$. To simplify computations, we set $L = 1$ in the following, noting that the behavior for general $L$ is recovered via rescaling.

\begin{figure}[!htpb]
\centering
	\scalebox{.5}{\input{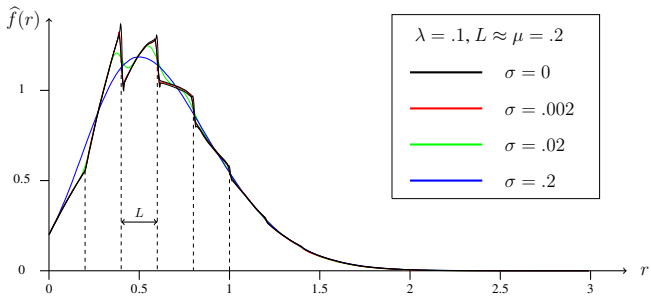}}
	\caption{Effects of varying $\sigma$}
	\label{sigFig}
\end{figure}

To elucidate this behavior, we first note that the ratio $ \g / \la$ of the street intensity divided by the base-station intensity equals roughly $100$, so that the typical cell is much larger than the mesh size. Hence, for small distances, variations in the typical shortest-path length are induced by the random location of the origin on the initial segment rather than the size of the confining cell. For $r$ close to 0, the segment of length $2r$ centered at $o$ is contained entirely in the initial segment with high probability, so that $N_{r} = 2$. Hence, $f(0) \approx 2 \la$. 

%
%
More precisely, for $r \le 1/2$, with probability $1 - 2r$ the origin lies far enough from the endpoints and $N_r = 2$. Conversely, with probability $2r$, it lies close to one of the end points. Then, Figure \ref{zSmallFig} shows that at the side of that point, we find three segments containing a point at distance $r$, so that in total $N_r = 4$. 
\begin{figure}[!htpb]
\centering
	\begin{tikzpicture}[scale=1.5]

	\draw[thick] (-1,0)--(1,0);
	\fill (.45, -.05) rectangle (.55, .05);

	\draw[thick] (1,0)--(2.0,0);
	\draw[thick] (1,0)--(1,-1.0);
	\draw[thick] (1,0)--(1,1.0);

	\fill[blue] (-.2,0) circle (1.3pt);
	\fill[blue] (1.2,0) circle (1.3pt);
	\fill[blue] (1,.2) circle (1.3pt);
	\fill[blue] (1,-.2) circle (1.3pt);

	\draw[ultra thick, dashed, blue] (-.2, 0)--(1.2, 0);
	\draw[ultra thick, dashed, blue] (1, -.2)--(1, 0.2);

	\coordinate[label=above :$o$] (A) at (.5, 0);
	\draw[very thick, decorate, decoration={brace}] (.5, -.05) -- (-.2, -.05);
	\coordinate[label=below :$x$] (B) at (.25, -.05);

\end{tikzpicture}
	\caption{$N_r = 4$ for $r$ small and $o$ close to a crossing}
	\label{zSmallFig}
\end{figure}
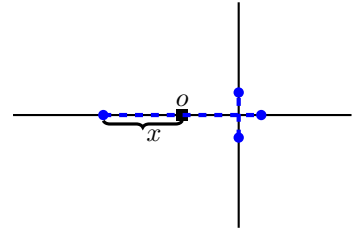

Hence, for $r \le 1/2$,
$$f(r) = \la \big(2(1 - 2r) + 8r\big) = {4\la r} + 2\la.$$
By a similar calculation, this relation remains valid for $1/2 \le r \le 1$. 

However, for $r \ge 1$ new effects appear. Indeed, for $1\le r \le \tfrac32 $, if $o$ is at distance at most $r - 1$ to one of the endpoints of the initial segment then $N_r = 12$,  and $N_r = 6$ otherwise. A detailed computation along these lines reveals that for $1 \le r < 2$, the density still increases linearly at a steeper slope
$$f(r) = 12\la r - 6 \la.$$

Finally, the discontinuity at $r = 2$ is caused by the occurrence of distance peaks. More precisely, Figure~\ref{zLargeFig} shows that for $r$ slightly smaller than 2, the two neighboring horizontal segments to the initial segment both contain two points at distance $r$. For $r > 2$ all four points disappear, so that the density drops by $4 \la = 2f(0)$. 

\begin{figure}[!htpb]
\centering
	\begin{tikzpicture}[scale=1.5]

	\draw[thick] (-1,0)--(1,0);
	\fill (.45, -.05) rectangle (.55, .05);

	\draw[thick] (-1, -2) rectangle (1, 2);

	\draw[ultra thick, dashed, blue] (-1, 0)--(1, 0);
	\draw[ultra thick, dashed, blue] (-1, -2)--(-1, 2);
	\draw[ultra thick, dashed, blue] (1, -2)--(1, 2);
	\draw[ultra thick, dashed, blue] (-.4, -2)--(1, -2);
	\draw[ultra thick, dashed, blue] (-.4, 2)--(1, 2);
	\draw[ultra thick, dashed, blue] (-.6, -2)--(-1, -2);
	\draw[ultra thick, dashed, blue] (-.6, 2)--(-1, 2);

	\fill[blue] (-.55, -2) circle (1.3pt);
	\fill[blue] (-.55, 2) circle (1.3pt);
	\fill[blue] (-.45, -2) circle (1.3pt);
	\fill[blue] (-.45, 2) circle (1.3pt);

	\coordinate[label=above :$o$] (A) at (.5, 0);

\end{tikzpicture}
	\caption{Points at graph distance almost $2$ from $o$}
	\label{zLargeFig}
\end{figure}
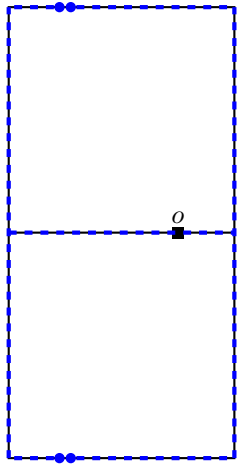

In principle, this analysis can be continued to yield a piecewise linear structure also for larger values of $r$. However, for finite $\g / \la$, at some stage the effects of the confining cell can no longer be neglected and induce the observed exponential decay of the density over long distances.

\subsection{Results for strong asymmetry between horizontal and vertical streets}

Finally, we consider the tail behavior of the typical shortest-path length. Conditioned on the street system, the network components are distributed as a Poisson point process, whose tails decay rapidly. Hence, from the theory side, one would expect that also the tails of the density $f(r)$ share this rapid decay, similarly as in the isotropic setting \cite{spl}. However, if there is a strong asymmetry between the horizontal and vertical street density, then this decay may become visible only at very long distances. 

Indeed, if the horizontal street density exceeds the vertical one substantially, then the origin is most likely located on a horizontal street. Moreover, the typical cell can contain several segments of parallel horizontal streets, but is unlikely to be hit by a vertical street. For points on the initial horizontal segment, the shortest-path length to the origin is small, as they can connect to it along a single line segment. However, for points on parallel segments inside the cell, the shortest paths to the origin are often very long. As illustrated in Figure~\ref{longDistFig}, such points first have to connect to one of the rare vertical streets, and after that again need to cover a substantial distance to reach back to the origin. 

\begin{figure}[!htpb]
\centering
	\begin{tikzpicture}[scale=1.5]

	\draw[thick] (-4.2,0)--(1.2,0);
	\fill (.45, -.05) rectangle (.55, .05);

	\draw[thick] (-4.2, .7) -- (1.2, .7);
	\draw[thick] (-4.2, .5) -- (1.2, .5);
	\draw[thick] (-4, .9) -- (-4, -.2);

	\draw[ultra thick, dashed, blue] (.5, 0)--(-4, 0);
	\draw[ultra thick, dashed, blue] (-4, .7)--(-4, 0);
	\draw[ultra thick, dashed, blue] (.55, .7)--(-4, .7);

	\fill[blue] (.55, .7) circle (1.3pt);

	\coordinate[label=above :$o$] (A) at (.5, 0);

\end{tikzpicture}
	\caption{Point on horizontal segment with long distance to $o$}
	\label{longDistFig}
\end{figure}
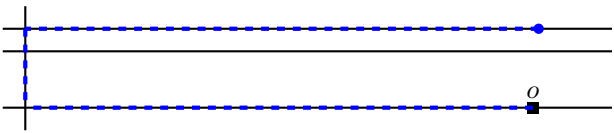

Finally, we show how the long path lengths manifest in the density $f(r)$. Indeed, Figure~\ref{ratioFig} illustrates that if horizontal and vertical streets occur at comparable intensities, then the density indeed decays rapidly for long distances. However, if there is a strong asymmetry between horizontal and vertical streets, then also distances that are far beyond the median of the distribution occur with a substantial probability.
%
%
\begin{figure}[!htpb]
\centering
	\scalebox{.5}{\input{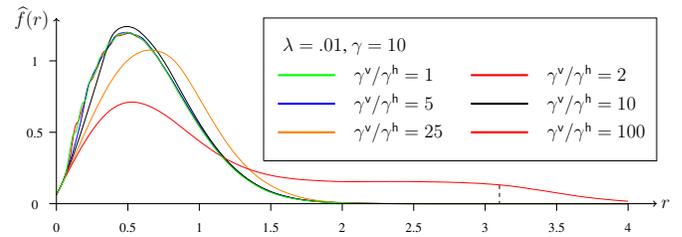}}
	\caption{Effects of asymmetry between horizontal and vertical street intensity}
	\label{ratioFig}
\end{figure}


\section{Conclusion}
\label{concSec}

This work paves the way to a detailed analysis of the typical cell in telecommunication networks with nodes located on anisotropic street systems. We have derived tractable representations of the Palm distributions of the (nested) Manhattan grid. Based on these descriptions we provided a simulation algorithm for the typical cell. Finally, regarding the implications of anisotropy on central network characteristics, we could illustrate in a simulation study that the probability density of typical shortest-path lengths can exhibit less regularity and stronger tails in comparison to the isotropic case.

These observations raise the question which other key network characteristics are affected by the anisotropy, regularity and long-range dependence in the Manhattan grid? Clearly, this includes the percolation probability and the stretch factor considered in \cite{d2dProceeding}. Moreover, to draw conclusions for more heterogeneous networks it is of interest to investigate how strongly effects of anisotropy prevail if some nodes are not on the streets, but scattered at random in the entire plane.

Moreover, the presented street-system model can serve as a starting point for deeper studies in the context of wireless communications, for example by considering mobile devices moving along streets. Also here it would be interesting to see, if anisotropy has a measurable effect on the quality of service of such systems.

\section*{Acknowledgements}
This research was supported by Orange S.A.~grant CRE H09261, by the Leibniz program \emph{Probabilistic Methods for Mobile Ad-Hoc Networks} and by The Danish Council for Independent Research -- Natural Sciences, grant DFF 7014-00074 Statistics for point processes in space and beyond, and by the Centre for Stochastic Geometry and Advanced Bioimaging, funded by grant 8721 from the Villum Foundation.

\bibliographystyle{IEEEtran}
\bibliography{../../wias}
\end{document}